\documentclass[aps,pra,10pt,notitlepage]{revtex4-2}
\usepackage{amsfonts,amssymb,amsmath}            
\usepackage{lmodern}
\usepackage[]{graphics,graphicx}            
\usepackage{amsthm}
\usepackage{bbold,enumitem,mathtools}
\usepackage{caption}
\usepackage{tabularx}
\usepackage{subcaption}
\usepackage{tikz}
\usetikzlibrary{decorations.pathmorphing,patterns,decorations.markings,matrix,quantikz}
\usepackage{adjustbox}
\usepackage{nicematrix}
\usepackage{xstring}
\usepackage{ifthen}
\usepackage{float}
\usepackage{listings}
\lstset{
  basicstyle=\ttfamily,
  mathescape
}

    \usepackage{hyperref}

\usepackage[english]{babel}

\makeatletter
\def\bbl@set@language#1{%
  \edef\languagename{%
    \ifnum\escapechar=\expandafter`\string#1\@empty
    \else\string#1\@empty\fi}%
  \@ifundefined{babel@language@alias@\languagename}{}{%
    \edef\languagename{\@nameuse{babel@language@alias@\languagename}}%
  }%
  \select@language{\languagename}%
  \expandafter\ifx\csname date\languagename\endcsname\relax\else
    \if@filesw
      \protected@write\@auxout{}{\string\select@language{\languagename}}%
      \bbl@for\bbl@tempa\BabelContentsFiles{%
        \addtocontents{\bbl@tempa}{\xstring\select@language{\languagename}}}%
      \bbl@usehooks{write}{}%
    \fi
  \fi}
\newcommand{\DeclareLanguageAlias}[2]{%
  \global\@namedef{babel@language@alias@#1}{#2}%
}
\makeatother

\DeclareLanguageAlias{en}{english}

\newtheorem{theorem}{Theorem}

\newenvironment{proof*}[1][\proofname]{%
  
  \begin{proof}[#1]}{\end{proof}}

\newcommand{\identity}{\mathbb{1}}
\renewcommand{\epsilon}{\varepsilon}

\pgfset{
  foreach/parallel foreach/.style args={#1in#2via#3}{evaluate=#3 as #1 using {{#2}[#3-1]}},
}


\begin{document}

\title{Combatting the Effects of Disorder in Quantum State Transfer}
\date{\today}
\author{Catherine \surname{Keele}}
\author{Alastair \surname{Kay}}
\affiliation{Royal Holloway University of London, Egham, Surrey, TW20 0EX, UK}
\email{alastair.kay@rhul.ac.uk}
\begin{abstract}
In this paper, we examine disorder (i.e.\ static imperfections in manufacture) for the fixed-Hamiltonian evolution protocol of quantum state transfer. We improve the performance by optimising the choice of Hamiltonian, and by implementing an encoding/decoding procedure on small regions at either end of the chain. We find that encoding in only the single excitation subspace is optimal, and provides substantial enhancement to the operating regime of these systems.
\end{abstract}
\maketitle

\section{Introduction}

In the study of quantum state transfer \cite{bose2003,bose2007,kay2010a}, the aim is to specify a Hamiltonian that, if we can build it, will use its natural dynamics to transfer an unknown quantum state between two distant sites using only nearest-neighbour interactions, and benefiting from multi-particle interference to derive potential benefits such as an improvement in transfer speed over the discrete operation of quantum gates.

Inevitably, real devices will suffer from a variety of negative effects that will detract from the ideal theoretical operation. These include fabrication defects (disorder) and noise. A range of previous studies \cite{dechiara2005,kay2006b,zwick2011,ronke2016,burrell2007} have evaluated certain pre-existing solutions for the state transfer Hamiltonian, drawing conclusions about which are preferable. Aside from \cite{burgarth2005} and derivatives, which made use of multiple chains in parallel, few \cite{zwick2012} have explored the engineering problem of how best to anticipate this disorder and mitigate its effects \footnote{techniques such as error correction \cite{kay2016c,kay2017d} will work, but have not been explicitly considered in this instance.}. That is our main focus here, but we restrict to the setting where we have a single use of a single chain. We study two techniques. The first is based on the encoding/decoding technique of Haselgrove \cite{haselgrove2005} -- by encoding the state to be transferred over a small number of sites, we will see that significant gains in fidelity can be realised. Haselgrove primarily concentrated on the single excitation subspace. Here, we show that his suggested generalisation to multiple excitations is sub-optimal, and improve upon the quality of transfer that is achieved. Nevertheless, we also show that encoding in the single excitation subspace is optimal for chains with high enough transfer fidelity (coinciding with a threshold where the transfer task becomes trivial). Our second strategy is based on the idea of actively modifying the chains so that, rather than having optimal performance in the perfect, theoretical, case, their expected behaviour under a given disorder model is improved. This is primarily a numerical technique.

\subsection{State Transfer}

The model of state transfer that is typically considered makes use of the geometry of a chain to couple qubits,
\begin{equation}\label{eqn:ham}
H_0=\frac12\sum_{n=1}^NB_nZ_n+\frac12\sum_{n=1}^{N-1}J_n(X_nX_{n+1}+Y_nY_{n+1}),
\end{equation}
where $X_n$ is the standard $\sigma_x$ Pauli operator applied to qubit $n$, and $\identity$ on all other sites. Having placed an unknown quantum state $\ket{\psi}$ at one end of the chain, and initialising all other qubits in the $\ket{0}$ state, the aim is to transfer the state to the opposite end of the chain:
$$
\ket{\psi}\ket{0}^{\otimes(N-1)}\xrightarrow{e^{-iH_0t_0}}\ket{0}^{\otimes(N-1)}\ket{\psi}.
$$ 
In practice, this will never happen perfectly, and we evaluate the efficacy of the transfer using the fidelity
$$
F=\bra{\psi}\left(\text{Tr}_{1,2,\ldots,N-1}\rho\right)\ket{\psi},
$$
where $\rho$ is the evolved state of the system at the time $t_0$ that we choose to extract the state from the final qubit. As we will not be considering noise in this paper, $\rho$ will be a pure.

The Hamiltonian that we have chosen decomposes into a series of subspaces characterised by the excitation number, i.e.\ the number of 1s in the basis state. We will primarily be concentrating on the single excitation subspace comprising the basis states $\ket{n}:=\ket{0}^{\otimes (n-1)}\ket{1}\ket{0}^{\otimes (N-n)}$. Within this subspace, the Hamiltonian $H_0$ is described by an $N\times N$ matrix $H_1$ with $\{B_n\}$ on the diagonal, and $\{J_n\}$ on the sub- and super-diagonals. However, we note that the performance of the chain in higher excitation subspaces is directly related to that of the single excitation subspace thanks to the Jordan-Wigner transformation mapping to a free-fermion model \cite{jordan1928,nielsen2005}.

There are many solutions for perfect state transfer \cite{christandl2004,albanese2004,zwick2015,difranco2008}, pretty good transfer \cite{vanbommel2016,godsil2012,coutinho2016} or high fidelity transfer \cite{apollaro2012,banchi2010,bose2003}, while there are infinitely many solutions based on inverse eigenvalue problems \cite{karbach2005}. In order to facilitate fair comparison between these models, we rescale them all so that they have the same maximum coupling strength (all strengths can be rescaled $J_i\rightarrow \alpha J_i, B_i\rightarrow\alpha B_i$ provided $t_0\rightarrow t_0/\alpha$). This is a natural assumption because physical implementations will want to achieve transfer as quickly as possible (to minimise the effects of noise, whose dominant term will appear in the form $e^{-\gamma t_0}$), and hence will operate at the threshold of the largest coupling strength that can reasonably be made. The models that we consider for numerical testing are:
\begin{description}
\item [Uniform Coupling] \hfill\\
Bose's original model for state transfer \cite{bose2003} sets $J_n=1$ and $B_n=0$ for all couplings and fields. This is the simplest model, and has a fast initial transfer at a time $\sim (N+0.8 N^{1/3})/2$. However, this first peak can be weak, yielding a fidelity $F\sim\frac13+\frac16(1+1.35N^{-1/3})^2$.
\item [Apollaro Chain] \hfill\\
A simple modification of the uniform chain, setting $J_1=J_{N-1}=x$ and $J_2=J_{N-2}=y$, where $x$ and $y$ are numerically chosen to optimise the end-to-end transfer fidelity \cite{apollaro2012,banchi2010}. This method is (nearly) as fast as the uniform model, and achieves a finite fidelity of at least $F=0.99$ in the limit of long chain lengths $N$. With $y=1$ these are just the ``optimisable state transfer'' systems described in \cite{zwick2015}.
\item [PST Chain] \hfill\\
The first and most popular solution to perfect state transfer $F=1$ \cite{christandl2004,christandl2005}, with $J_n=2\sqrt{n(N-n)}/N$ (even $N$) or $J_n=2\sqrt{n(N-n)}/\sqrt{N^2-1}$ (odd $N$). This is the optimal perfect solution according to many parameters, such as state transfer time $t_0=\pi N/2$ (even) or $t_0=\pi\sqrt{N^2-1}/2$ (odd) \cite{yung2006,kay2016b}.
\item [Quadratic Chain] \hfill \\
This chain \cite{zwick2015} achieves perfect transfer, $F=1$, using a quadratic spectrum. In \cite{zwick2015}, this was shown to be particularly robust against certain types of disorder.
\end{description}
In fact, we will not consider the Quadratic Chain any further -- as we shall prove in Sec.\ \ref{sec:quadratic}, its transfer time scales as $\sim N^2$. As such, it is far more susceptible to that $e^{-\gamma t_0}$ noise term than our other candidates. Thus, even though we will not be considering noise directly in this paper, as part of the wider ecosystem, we consider this model to have been effectively eliminated already, along with the pretty good transfer variants of the uniform chain \cite{vanbommel2016,godsil2012,coutinho2016} (these studies have shown that by waiting sufficiently long, there are certain chain lengths for which arbitrarily high transfer fidelity can be achieved, but it is a long wait).

This is also the reason for eliminating what is otherwise an extremely versatile and successful model for tolerating disorder -- the dimer model \cite{wojcik2007}. In this, one takes a system (could be a chain, but need not be restricted), and adds two pendant vertices to be the qubits that are transferred between (label them 1 and $N$). These are weakly coupled to the rest of the system, and have the same magnetic field applied at a strength that is distinct from all the eigenvalues of the main system. When one analyses this from the perspective of degenerate perturbation theory, there are two eigenvectors that have support on those two vertices, which are approximately
$$
\frac{1}{\sqrt{2}}(\ket1\pm\ket{N}),
$$
with energy gap $\delta$. Hence a state starting in $\ket{1}$ arrives at site $\ket{N}$ in time $\pi/\delta$. The details of the intermediate couplings are irrelevant; if they are slightly faulty, then provided the perturbation is sufficiently small that the energies of the two pendant vertices remain distinct, the transfer still functions, it's just that the value $\delta$ might change. Since $\delta$ can be determined by measuring the actual system we have after manufacture, it can be adapted for, and a broad spectrum of disorder can be tolerated. However, the value $\delta$ is exponentially small in the minimal order of perturbation theory required, which is the distance between the two vertices (as measured by the distance on the underlying coupling graph), and will thus be incredibly small at even modest transfer distances.

\subsubsection{Transfer Time of Quadratic Model}\label{sec:quadratic}

In \cite{zwick2015}, a Hamiltonian was proposed where the spectrum is quadratic. The authors report numerical results suggesting that the transfer time scales as $\pi N^2/16$. We now give an analytic bound on this, using the techniques of \cite{yung2006,kay2016b}. For even chain lengths, we follow \cite{yung2006} directly:
$$
2J_{\max}\geq 2J_{N/2}=\text{Tr}(SH_1)=\sum\lambda_n(-1)^{n+1},
$$
where the $\lambda_n$, the eigenvalues of $H_0$ in the single excitation subspace, are ordered such that $\lambda_n<\lambda_{n+1}$ and $S$ is the swap operator
$$
S=\sum_{n=1}^N\ket{n}\bra{N+1-n}.
$$
This specific Hamiltonian is defined by the choice $\lambda_n=\pm 1,\pm 2^2,\pm 3^2,\ldots,\pm(N/2)^2$ by imposing mirror symmetry and solving an inverse eigenvalue problem. Note that the smallest gap is of size 2, so the state transfer time is $t_0=\frac{\pi}{2}$. This sum simplifies to
$$
2J_{N/2}=2(-1)^{N/2}\sum_qq^2(-1)^q=\frac{N}{4}(N+2).
$$
Thus, when we rescale such that the maximum coupling strength is 1, we have $t_0\geq\frac{\pi}{16}N(N+2)$.

For odd chain lengths, we follow \cite{kay2016b}:
$$
4J_{\max}^2\geq 4J_{(N-1)/2}^2=\text{Tr}(SH_1^2)=\sum_{n=1}^N\lambda_n^2(-1)^{n+1}.
$$
This time, the spectrum is $0,\pm1,\pm 2^2,\pm 3^2,\ldots,\pm \left(\frac{N-1}{2}\right)^2$ and the perfect transfer time will be $t_0=\pi$. We can evaluate this to find that
$$
J_{\max}^2\geq J_{(N-1)/2}^2=\frac{N+1}{64}(N^3-N^2-5N+5).
$$
As such, the perfect transfer time is at least
$$
t_0\geq \frac{\pi}{8}\sqrt{(N+1)(N-1)(N^2-5)}\sim\frac{\pi N^2}{8}.
$$
The discrepancy between even and odd cases, being roughly a factor of 2, is due to the absence of the 0 eigenvalue in the even case. In other chains, such as the PST case, this discrepancy is resolved by shifting the eigenvalues in the even length case to half-integer values. This solution is comparable to the Hahn chains with quadratic spectrum given analytically in \cite{albanese2004}, the difference being that the spectrum of \cite{albanese2004} is not symmetric about 0 and therefore contains diagonal elements in the single excitation subspace. Since state transfer for this model requires a time $O(N)$ longer than other models, we do not consider it any further.

\subsection{Disorder}

Disorder, introduced by, for example, manufacturing defects, is inevitable. Coupling strengths $J_n$ and magnetic fields $B_n$ will not be their intended strengths. There are several ways that we might parametrise these effects, with the choice being determined largely by the physical system in which we realise the state transfer.
\begin{description}
\item [Multiplicative Error] \hfill\\
In this case, $J_n\rightarrow J_n(1+\delta_n)$ where $\delta_n$ is randomly chosen from some distribution. This is highly relevant to some of the existing experiments \cite{perez-leija2013} where waveguides were positioned at a distance $r$ up to an error, and couplings were of an evanescent type, $J_n\propto e^{-\alpha r}$. Clearly this is not appropriate for magnetic fields that are all 0.
\item [Additive Error] \hfill\\
In this case, $J_n\rightarrow J_n+\delta_n$ where $\delta_n$ is randomly chosen from some distribution.
  \end{description}
One could choose any distribution for the coupling strengths. Two simple ones are the uniform distribution, where any value between $\pm\delta$ is equally likely, and the normal distribution with mean 0 and standard deviation $\sigma$. It is not expected to make a significant difference which distribution is selected. The choice between additive or multiplicative error is entirely irrelevant for the uniform chain, and is barely relevant to the Apollaro chain. On the other hand, with large variations in coupling strength in the PST chain, there may be a significant difference, and one would predict that the additive error would be far more destructive as it could easily obliterate the finely conceived coupling pattern at either end of the chain.

How should we modify our concept of fidelity in the presence of disorder? For a single instance, our previous definition of $F$ is perfectly valid. If we index different instances of disorder by $i$, then $F_i$ will describe the fidelity of an individual instance.
\begin{description}
\item [Average Fidelity] \hfill\\
If we evaluate $\bar F=\sum_{i=1}^MF_i/M$, this will evaluate what we might typically expect to achieve in a given experiment. However, it may be misleading in a regime where most values are close to 1 (as they cannot go above 1), but there can be the occasional substantial drop in fidelity (as the lower bound is 0).
\item [Minimum Fidelity] \hfill\\
The average fidelity is not much use in any given scenario. If you're provided with a chain to use, you need some guarantee on its performance. Hence a more reliable metric might be $F_{\min}=\min_iF_i$. However, we find this to be too pessimistic.
\item [Quantile Fidelity] \hfill\\
In practice, the way that one might anticipate state transfer being implemented is to manufacture several chains in advance of their being needed. We can test them all in advance and find the one with highest transfer fidelity, and use that one.  If we made $k$ samples and tested them, the probability that at least one of these is in the upper quartile (say) is $1-(\frac34)^k$. Thus, selecting a quantile (we will select the upper quartile) gives us a quantitative expectation in this multiple sample situation, and is our preferred metric.
\end{description}

\section{Encoding Strategy}

\subsection{Single Excitation Subspace}

In \cite{haselgrove2005}, Haselgrove introduced a technique for making substantial gains in state transfer fidelity. Instead of placing an unknown state $\alpha\ket{0}+\beta\ket{1}$ on the first site, and receiving it on the last site, he proposed encoding the initial state over a small set of sites $\Lambda_{\text{in}}$: $\alpha\ket{0}+\beta\ket{\Psi_{\text{in}}}$ and receiving that state on a small set of sites $\Lambda_{\text{out}}$. He concentrated primarily on encoding in the single excitation subspace such that
$$
\ket{\Psi_{\text{in}}}=\sum_{i\in\Lambda_{\text{in}}}\gamma_i\ket{i}.
$$

The method is remarkably simple -- construct the matrix
$$
M_1=\sum_{i\in\Lambda_{\text{in}}}\sum_{j\in\Lambda_{\text{out}}}\bra{j}e^{-iH_1t}\ket{i}\ket{j}\bra{i},
$$
and simply evaluate the singular value decomposition. The maximum singular value $\lambda$ is related to the transfer fidelity by
$$
F=\frac{1}{3}+\frac{(1+\lambda)^2}{6},
$$
and the optimal choice of $\ket{\Psi_{\text{in}}}$ is determined by the corresponding right singular vector, while the arriving state is given by the left singular vector. The method is also particularly relevant to the disorder scenario -- once we have manufactured a chain, we cannot control its (imperfect) couplings but we can nevertheless identify them and modify our encoding strategy based on that knowledge.

\begin{figure}
\begin{center}
\includegraphics[width=0.45\textwidth]{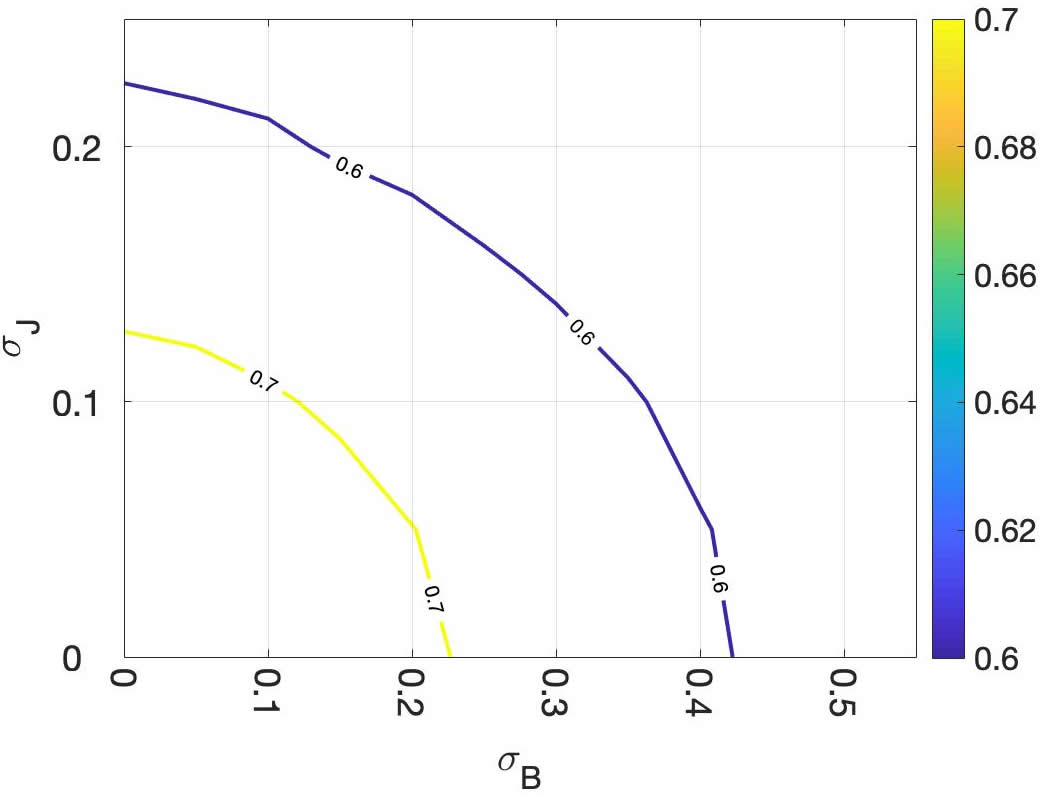}
\includegraphics[width=0.45\textwidth]{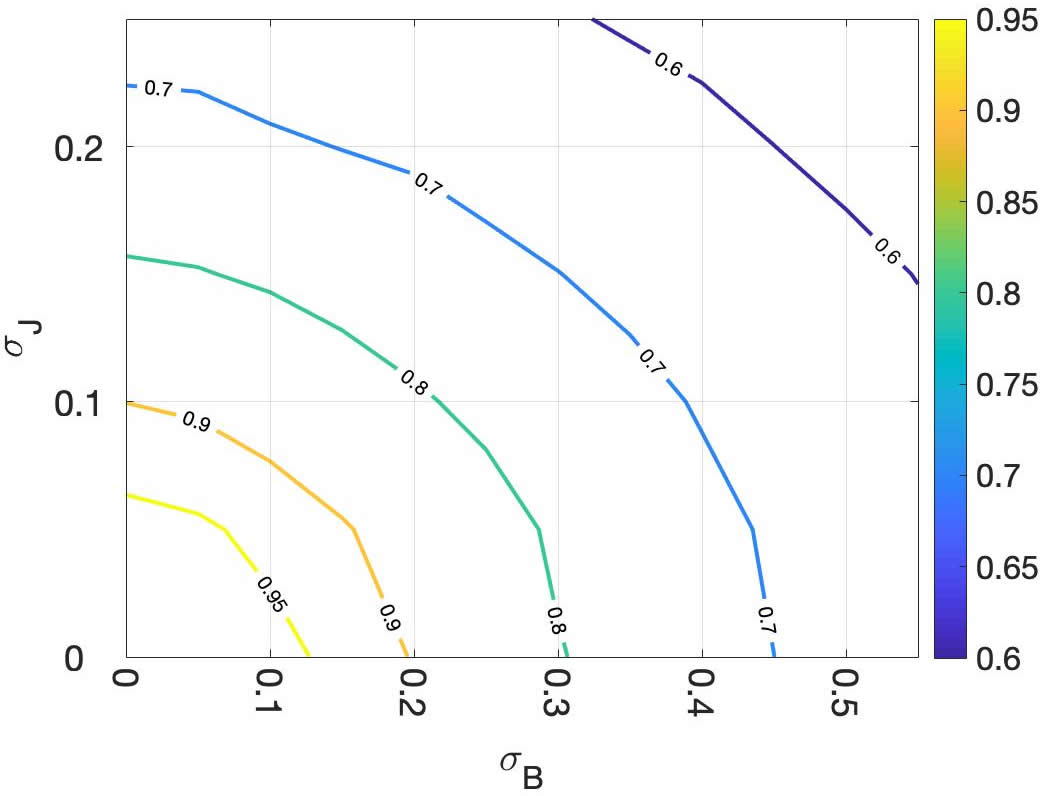}
\end{center}
\caption{Comparison of the uniformly coupled chain using no encoding (left) and encoding/decoding over 5 sites (right) in the presence of disorder. Chain length 51. Coupling (field) strength errors are selected according to a normal distribution with 0 mean and $\sigma_J$ ($\sigma_B$) standard deviation. Fidelity is the upper quartile value chosen from 1000 samples.}\label{fig:uniform}
\end{figure}

In Fig.\ \ref{fig:uniform}, we demonstrate the effect of encoding in the single excitation subspace. Even a modestly sized encoding region of 5 qubits shows a significant enhancement in performance, particularly in terms of achieving high fidelities in the weak disorder regime.

\begin{figure}
\begin{center}
\includegraphics[width=0.32\textwidth]{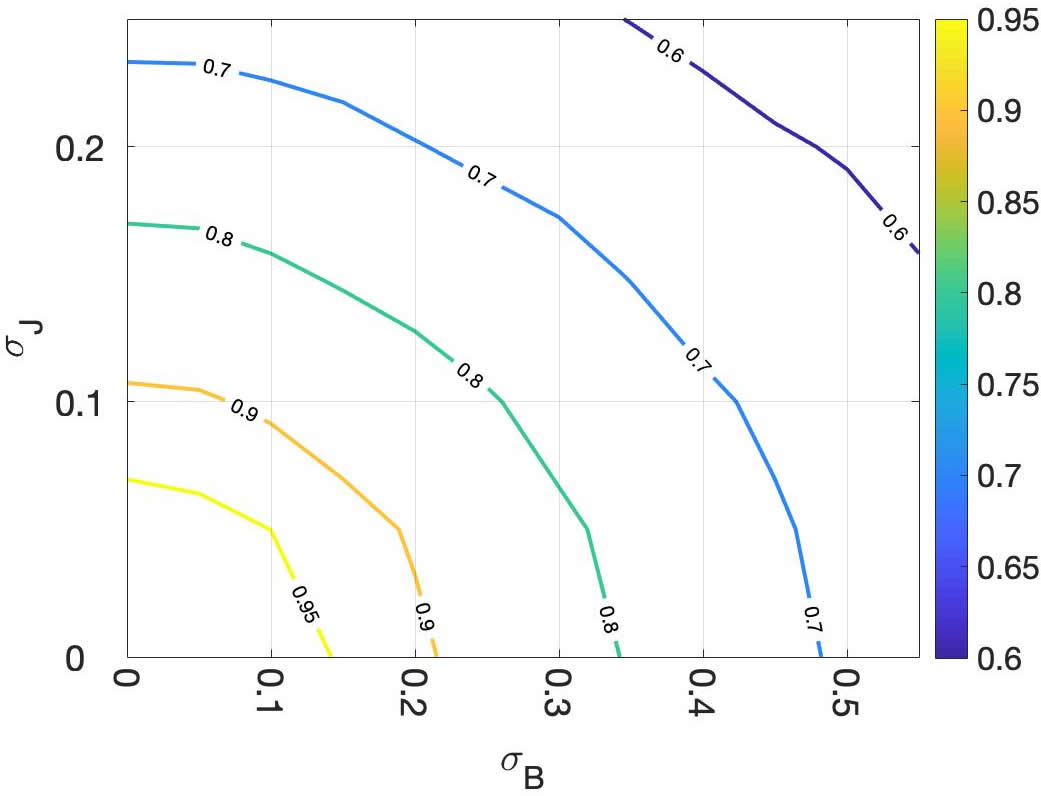}
\includegraphics[width=0.32\textwidth]{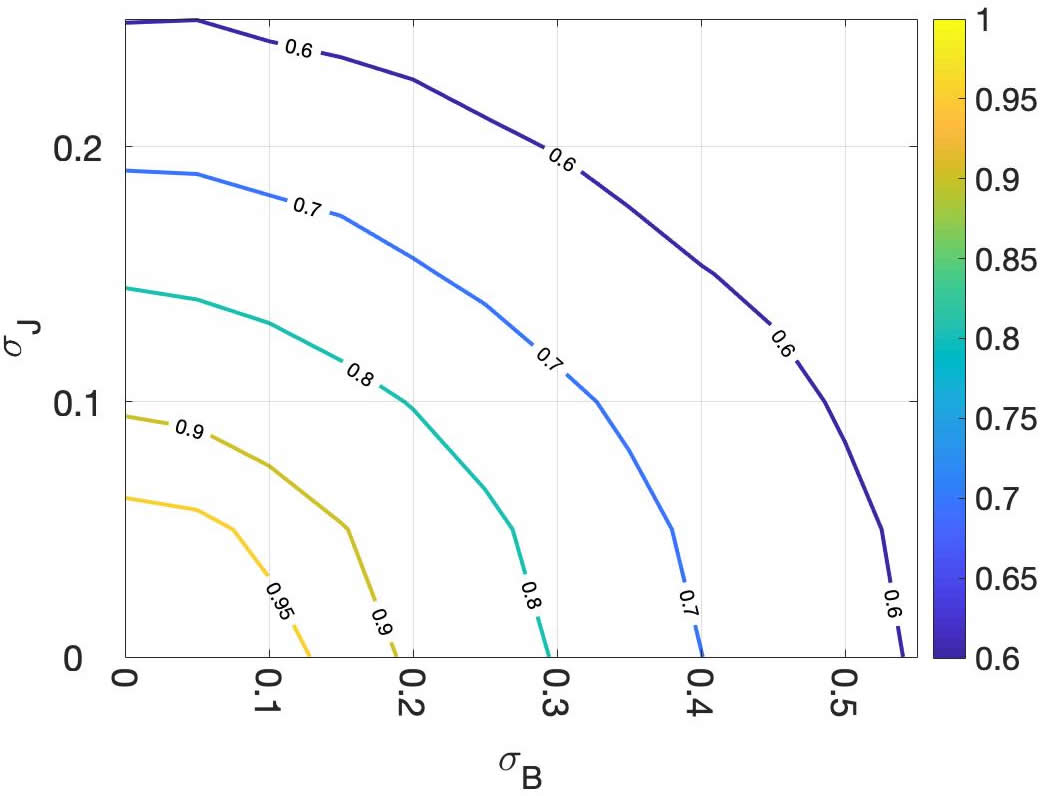}
\includegraphics[width=0.32\textwidth]{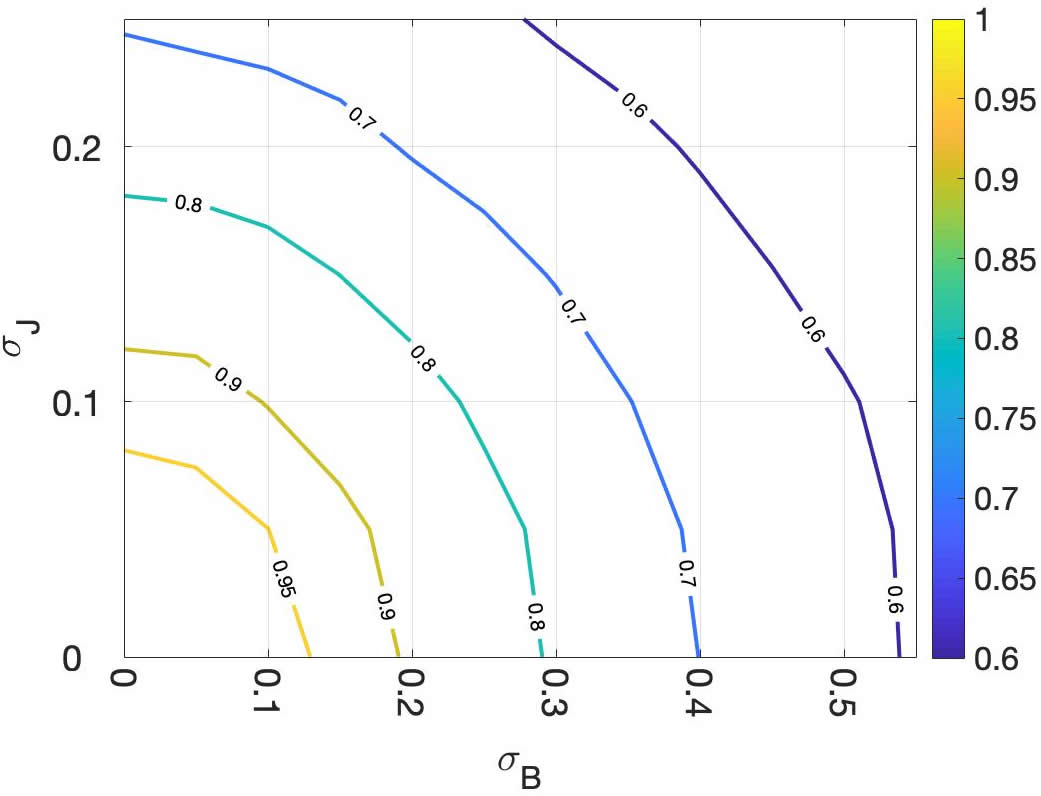}
\end{center}
\caption{Comparison of the Apollaro (left) and PST (middle, right) chains using encoding/decoding over 5 sites in the presence of disorder. Chain length 51. Coupling (field) strength errors are selected according to a normal distribution with 0 mean and $\sigma_J$ ($\sigma_B$) standard deviation. Fidelity is the upper quartile value chosen from 1000 samples, and is additive (left, middle) or multiplicative on the coupling strengths (right).}\label{fig:nonuniform}
\end{figure}

In Fig.\ \ref{fig:nonuniform}, we plot the same graph as in the right of Fig.\ \ref{fig:uniform}, with the same parameters, but for the two different chains of Apollaro and PST. We see near-identical performance to the uniform case. The similarity in performance is to be expected in the limit of large encoding sizes -- for encoding/decoding regions of size $\lceil\frac{N+1}{2}\rceil$, there is an optimal encoding of just placing the state to transfer on the central spin and using a state transfer time of 0 -- the central spin is common to both the encoding and decoding regions, so you get perfect transfer no matter what the underlying chain is. However, it is surprising to see such homogenisation of results for such a modest encoding/decoding region.

As predicted, the performance of the PST chain is worse for additive errors than it is for multiplicative errors.

\subsection{Higher Excitation Encodings}\label{sec:multiex}

Does encoding into a higher excitation subspace offer any benefit? We could directly follow Haselgrove's original paper \cite{haselgrove2005}. However, the calculation that is suggested is misleading and undervalues the transfer fidelity that is possible. Moreover, an important feature of our chosen Hamiltonian $H_0$ is that it is a free-fermion model via the Jordan-Wigner transformation \cite{jordan1928,nielsen2005}. In essence, this means that the behaviour in higher excitation subspaces is entirely determined by the behaviour in the first excitation subspace. Let us use a notation of
$$
a_n^\dagger=\frac12Z_1Z_2\ldots Z_{n-1}(X_n-iY_n),
$$
such that $a_n^\dagger\ket{0}^{\otimes N}=\ket{n}$.

Consider the right-singular vectors $\underline{u}^n$ of $M_1$. We can define
$$
b_n^\dagger=\sum_{m\in\Lambda_{\text{in}}}u^n_ma_m^\dagger.
$$
In time $t$, these evolve to
$$
c_n^\dagger=e^{-iH_0t}b_n^\dagger e^{iH_0t}=\sum_{m=1}^Nv^n_ma_m^\dagger.
$$
The vectors $\underline{v}^n=(v^n_m)_{m\in\Lambda_{\text{out}}}$ are the right singular vectors of $M_1$ up to normalisation (which is the singular value of $M_1$). This will be useful as the vectors $\underline{u}^n$ and $\underline{v}^n$ form orthonormal bases. We shall divide the output into two components, those creating excitations on the decoding region, or not:
$$
c_n^\dagger=\lambda_nc_{n,\text{out}}^\dagger+\sqrt{1-\lambda_n^2}c_{n,\overline{\text{out}}}^\dagger.
$$

Next, we define two projectors on the output region,
$$
P_0=\proj{0}^{\otimes|\Lambda_{\text{out}}|},\qquad P_1=\identity-P_0.
$$
Haselgrove suggests that the calculation we should perform is that
$$
C=\|P_1\otimes\proj{0}^{\otimes(N-|\Lambda_{\text{out}}|)}_{\overline{\text{out}}}e^{-iH_0t}\ket{\Psi_{\text{in}}}\|,
$$
and the transfer fidelity is then
$$
F=\frac13+\frac{(1+C)^2}{6}.
$$
From the Jordan-Wigner perspective, if we start with an encoding $\ket{\Psi_\text{in}}=b_1^\dagger b_2^\dagger\ket{0}^{\otimes N}$, Haselgrove is evaluating the amplitude with which \emph{both} quasi-particles arrive in the decoding region, $C=\lambda_1\lambda_2$ (as imposed by the fact that the rest of the system must be in the $\ket{0}$ state). We will now see that one can perform significantly better.

The arriving state of the chain is
$$
\alpha\ket{0}^{\otimes N}+\beta c_1^\dagger c_2^\dagger\ket{0}^{\otimes N}=\alpha\ket{0}^{\otimes N}+\beta (c_1^\dagger c_2^\dagger-\sqrt{(1-\lambda_1^2)(1-\lambda_2^2)}c_{1,\overline{\text{out}}}^\dagger c_{2,\overline{\text{out}}}^\dagger)\ket{0}^{\otimes N}+\beta\sqrt{(1-\lambda_1^2)(1-\lambda_2^2)}c_{1,\overline{\text{out}}}^\dagger c_{2,\overline{\text{out}}}^\dagger\ket{0}^{\otimes N}.
$$
This has three terms. The first is simply the initial $\ket{0}^{\otimes N}$, which, as ever, remains unchanged because it is an eigenstate of $H_0$. The second term is all the components of the $\ket{\Psi_\text{in}}$ state for which at least one excitation has arrived on the decoding region, while the third term is the component that has failed to arrive.

We now introduce a single ancilla in state $\ket{0}_A$ which we will use to receive the arriving state. To do this, we will apply some decoding unitaries. First, apply a unitary $U=P_0\otimes\identity_A+P_1\otimes X_A$. At this point, however, the chain and the ancilla are highly entangled. If we then apply a controlled-unitary $c-V$, controlled off the ancilla and targeting the decoding region of the chain, we can partially disentangle the two systems. We retain significant freedom to choose $V$ to maximise the fidelity. After the controlled-unitaries, the state, now including ancilla, is
$$
\alpha\ket{0}^{\otimes N}\ket{0}_A+\beta V(c_1^\dagger c_2^\dagger-\sqrt{(1-\lambda_1^2)(1-\lambda_2^2)}c_{1,\overline{\text{out}}}^\dagger c_{2,\overline{\text{out}}}^\dagger)\ket{0}^{\otimes N}\ket{1}_A+\beta\sqrt{(1-\lambda_1^2)(1-\lambda_2^2)}c_{1,\overline{\text{out}}}^\dagger c_{2,\overline{\text{out}}}^\dagger\ket{0}^{\otimes N}\ket{0}_A.
$$
Tracing out the chain leaves a mixed state
\begin{multline*}
\proj{0}(|\alpha|^2+|\beta|^2(1-\lambda_1^2)(1-\lambda_2^2))+\proj{1}|\beta|^2(1-(1-\lambda_1^2)(1-\lambda_2^2))\\+\lambda_1\lambda_2(\alpha\beta^*\ket{0}\bra{1}\bra{0}c_{2,\text{out}}c_{1,\text{out}}V^\dagger \ket{0}+\alpha^*\beta\ket{1}\bra{0}\bra{0}Vc_{1,\text{out}}^\dagger c_{2,\text{out}}^\dagger\ket{0})
\end{multline*}
If we select $V$ such that $Vc_{1,\text{out}}^\dagger c_{2,\text{out}}^\dagger\ket{0}=\ket{0}$, this clearly serves to maximise the transfer fidelity and we get a total fidelity (averaged over all possible input states) of
$$
\frac13+\frac{(1+\lambda_1\lambda_2)^2}{6}+\frac{1}{6}\left(1-\lambda_1^2\lambda_2^2-(1-\lambda_1^2)(1-\lambda_2^2)\right).
$$
For multiple excitations, this generalises to
$$
\frac13+\frac{(1+\prod_i\lambda_i)^2}{6}+\frac{1}{6}\left(1-\prod_i\lambda_i^2-\prod_i(1-\lambda_i^2)\right),
$$
where the third term, which is non-negative, is the enhancement over the result of \cite{haselgrove2005}. We interpret this function as the success probability of the entire state transferring as we would wish (the first two terms, as predicted by \cite{haselgrove2005}) plus some additional terms adding to the weight of the arrival of $\ket{1}$, but not contributing to the coherence with the $\ket{0}$ term. These are all the terms except for the perfectly arriving state (which we have already counted) and the term for which none of the excitations arrive on the output region.

It is worth nothing that in the case of end-to-end transfer, this decoding process automatically incorporates that `transfer phase' which is often removed in a more \emph{ad hoc} manner.

\subsection{Choosing the Best Subspace for Encoding}

If it were the case that adding an excitation always increased the fidelity, then the optimal encoding would always be the all-ones state, independent of chain, and the encoding method would be relatively simple. This is not generally the case.

\begin{theorem}
For any state transfer protocol utilising a Hamiltonian of the form $H_0$ given in Eq.\ (\ref{eqn:ham}), if the largest singular value of $M_1$ is at least $\sqrt{2}-1$, then optimal performance is achieved by encoding in the single excitation subspace.
\end{theorem}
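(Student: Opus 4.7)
The plan is to reduce the theorem to a one-parameter inequality via monotonicity, and then to prove that inequality by splitting on a golden-ratio threshold. First, consolidating the fidelity formula of Sec.~\ref{sec:multiex} into the single expression $6F_k = 4 + 2\prod_i\lambda_i - \prod_i(1-\lambda_i^2)$, a direct derivative calculation gives $\partial F_k/\partial\lambda_i = \tfrac{1}{3}\prod_{j\neq i}\lambda_j + \tfrac{\lambda_i}{3}\prod_{j\neq i}(1-\lambda_j^2) > 0$, so $F_k$ is strictly increasing in every singular value. Because the top singular values obey $\lambda_i \leq \lambda_1$ for each $i$, this yields the upper bound $F_k \leq F_k^{\max}(\lambda_1)$ with $6F_k^{\max}(\lambda) := 4 + 2\lambda^k - (1-\lambda^2)^k$, i.e.\ the value attained were all top $k$ singular values to coincide with $\lambda_1$. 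It therefore suffices to prove the scalar inequality $F_1(\lambda) \geq F_k^{\max}(\lambda)$ for $\lambda \geq \sqrt{2}-1$ and $k \geq 2$.

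Writing the difference as $6[F_1(\lambda)-F_k^{\max}(\lambda)] = 2\lambda(1-\lambda^{k-1}) - (1-\lambda^2)\bigl(1-(1-\lambda^2)^{k-1}\bigr)$ and dividing by the positive factor $(1-\lambda^2)(1-\lambda^{k-1})$ transforms the required inequality into
\begin{equation*}
\frac{1-(1-\lambda^2)^{k-1}}{1-\lambda^{k-1}} \;\leq\; \frac{2\lambda}{1-\lambda^2}.
\end{equation*}
The hypothesis $\lambda \geq \sqrt{2}-1$ is equivalent to $2\lambda \geq 1-\lambda^2$, i.e.\ the right-hand side is at least $1$. This pinpoints $\lambda = (\sqrt{5}-1)/2$ (the value where $\lambda = 1-\lambda^2$) as the natural splitting point. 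For $\sqrt{2}-1 \leq \lambda \leq (\sqrt{5}-1)/2$, the bound $1-\lambda^2 \geq \lambda$ gives $(1-\lambda^2)^{k-1} \geq \lambda^{k-1}$, forcing the left-hand side to be at most $1$, hence at most the right-hand side. For $\lambda > (\sqrt{5}-1)/2$, telescoping $1-x^{k-1} = (1-x)\sum_{j=0}^{k-2}x^j$ with $a=\lambda$ and $b=1-\lambda^2<a$ gives $\sum_j b^j \leq \sum_j a^j$ termwise, so the left-hand side is at most $(1-b)/(1-a) = \lambda^2/(1-\lambda)$. The remaining step $\lambda^2/(1-\lambda) \leq 2\lambda/(1-\lambda^2)$ simplifies to $\lambda(1+\lambda) \leq 2$, which factors as $(\lambda+2)(\lambda-1) \leq 0$ and holds on $[0,1]$.

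The conceptual step is the monotonicity observation, which collapses the multi-parameter optimisation onto a one-variable inequality. The hard part is identifying the correct case split: the crude comparison $(1-\lambda^2)^{k-1} \geq \lambda^{k-1}$ used in the lower range reverses once $\lambda > (\sqrt{5}-1)/2$, and without the telescoping bound one is stuck comparing two exponentials whose relative sizes depend delicately on $k$. The threshold $\sqrt{2}-1$ in the theorem is sharp: taking all $\lambda_i = \sqrt{2}-1$ and sending $k\to\infty$ forces both sides to converge to $F_1 = 2/3$, so the hypothesis cannot be weakened.
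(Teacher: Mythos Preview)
Your proof is correct and follows essentially the same route as the paper: both use monotonicity of $F_k$ in each $\lambda_i$ to reduce to the all-equal case $\lambda_i=\lambda$, and then split at the golden-ratio threshold $\lambda=(\sqrt{5}-1)/2$, exploiting the pair of comparisons $2\lambda\geq 1-\lambda^2$ and $\lambda\lessgtr 1-\lambda^2$. The only cosmetic difference is in the upper range, where the paper shows each increment $F_{n+1}-F_n\leq 0$ via the bound $(1+\lambda)(1-\lambda^2)<2$, whereas you obtain the same conclusion through the telescoping ratio estimate $\sum_j b^j\leq\sum_j a^j$.
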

\begin{proof}
Let us denote by $F_n$ the fidelity achieved by encoding in $n$ excitations (using the $n$ largest singular values of $M_1$). If the values $\lambda_1$ to $\lambda_n$ are fixed, how are we best to select $\lambda_{n+1}$ under the constraint $0\leq\lambda_{n+1}\leq\lambda_{n}$? Consider the enhancement in fidelity by including this excitation:
$$
\Delta=F_{n+1}-F_n=\frac16\left(2(\lambda_{n+1}-1)\prod_{i=1}^n\lambda_i+\lambda_{n+1}^2\prod_{i=1}^n(1-\lambda_i^2)\right).
$$
The derivative is
$$
\frac{\partial \Delta}{\partial\lambda_{n+1}}=\frac16\left(\prod_{i=1}^n\lambda_i+2\lambda_{n+1}\prod_{i=1}^n(1-\lambda_i^2)\right),
$$
which is clearly positive. In other words, the fidelity is greatest by setting $\lambda_{n+1}=\lambda_n$.

Let us then proceed by setting all $\lambda_{n}=\lambda_1=\lambda$ such that
\begin{align*}
F_n&=\frac16\left(4+2\lambda^n-(1-\lambda^2)^n\right)\\
\Delta&=\frac16\lambda^2(1-\lambda)\left(-2\lambda^{n-2}+(1+\lambda)(1-\lambda^2)^{n-1}\right).
\end{align*}
We shall break our proof into a series of ranges.

Consider the range $\lambda\leq 1-\lambda^2\leq 2\lambda$. We directly evaluate
$$
6(F_n-F_1)=2(\lambda^{n-1}-1)\lambda-((1-\lambda^2)^{n-1}-1)(1-\lambda^2).
$$
Using the upper range, we have
$$
6(F_n-F_1)\leq 2\lambda(\lambda^{n-1}-(1-\lambda^2)^{n-1}).
$$
This is non-positive given the lower limit of the range. Hence, in this range, there is no benefit in using multiple excitations.

Next, consider the range $\lambda\geq 1-\lambda^2$. Since $(1+\lambda)(1-\lambda^2)<2$,
\begin{align*}
\Delta_n&=\frac16\lambda^2(1-\lambda)((1+\lambda)(1-\lambda^2)^{n-1}-2\lambda^{n-2}) \\
&\leq \frac13\lambda^2(1-\lambda)((1-\lambda^2)^{n-2}-\lambda^{n-2}).
\end{align*}
Since $(1-\lambda^2)^{n-2}<\lambda^{n-2}$, this is, again, negative. Thus, $F_n-F_1=\sum_k\Delta_k$ is negative, and $F_1$ is the largest fidelity.

\end{proof}
In other words, for any system that has sufficiently high state transfer fidelity, the single excitation subspace is optimal for encoding.

If we have $\lambda\leq \sqrt{2}-1$, this means that $F_n\leq\frac23$ for all $n$. Since $F=\frac23$ is the classical threshold for state transfer (measure the qubit, send the measurement result, and recreate the measured state), we would never be interested in operating below this threshold for state transfer (entanglement transfer, see Sec.\ \ref{sec:entdist} may still operate in this regime). In other words, we should always use the single excitation subspace for encoding. This means that, tuned to the specific instance of disorder, these single-excitation encodings must out-perform general purpose error correcting codes of the same size, such as \cite{kay2016c,kay2017d}.

\section{Hamiltonian Modification}

Rather than merely assessing how well existing solutions for state transfer perform, a stronger target would be to find a solution that has optimal performance in the presence of disorder. An attempt to find the globally optimal solution would be extremely challenging. Unconstrained, we anticipate that the dimer solution would likely be the solution but, as already specified, we need to constrain parameters such as the transfer time (particularly its scaling) and maximum coupling strength.

\subsection{Perturbations}

Instead of a globally optimal solution, perhaps we can find locally optimal solutions? Consider an initial Hamiltonian $H_0$, which suffers disorder in the form of perturbations $\delta H$. Can we add a new perturbation $V$ that compensates for the average effect of $\delta H$? (This could be specific to the disorder model parameters.) To this end, let us consider the Dyson expansion of the time evolution operator for end-to-end transfer
\begin{multline*}
f=\bra{N}e^{-i(H_0+\delta H+V)t_0}\ket{1}=\bra{N}e^{-iH_0t_0}\ket{1}-i\int_0^{t_0}dt\bra{N}e^{-i(H_0+V)(t_0-t)}\delta He^{-i(H_0+V)t}\ket{1}\\-\frac12\int_0^{t_0}dt\int_0^tdt_2\bra{N}e^{-i(H_0+V)(t_0-t)}\delta He^{-i(H_0+V)(t-t_2)}\delta He^{-i(H_0+V)t_2}\ket{1}+O(\delta^3).
\end{multline*}
For simplicity, denote these three terms by $f_0$, $f_1$ and $f_2$ respectively. The fidelity is
$$
F=|f|^2=|f_0|^2+f_0f_1^*+f_0^*f_1+f_0f_2^*+f_0^*f_2+|f_1|^2.
$$
Note that $f_0$ and $f_1$ do not depend on $\delta H$ beyond first order. When we average over $\delta H$, it must be that the terms $f_0f_1^*+f_0^*f_1$ vanish because our chosen distribution has 0 mean. Hence, to first order, $\bar F=|f_0|^2$. It would thus appear that we are best (for sufficiently weak disorder) to work with perfect transfer chains. This is hardly surprising. Let us therefore take $H_0+V=H_P$, a perfect transfer chain with state transfer time $t_0$. We will make the further assumptions that our Hamiltonian is field-free and that $N$ is odd, both consistent with all the models we have considered so far. When we revisit the Dyson expansion, this guarantees that
$$
f_0=\bra{N}e^{-iH_Pt_0}\ket{1}=\pm 1,
$$
and that
$$
f_1=-i\int_0^{t_0}dt\bra{N}e^{-iH_P(t_0-t)}\delta He^{-iH_Pt}\ket{1}=\mp i\int_0^{t_0}dt\bra{1}e^{iH_Pt}\delta He^{-iH_Pt}\ket{1}.
$$
As $\delta H$ is Hermitian, $\bra{\psi}\delta H\ket{\psi}$ is always real. Hence, $f_0f_1^*$ is imaginary, but $f_0f_1^*+f_0^*f_1$ only selects the real component. In other words, by selecting a perfect transfer Hamiltonian, we are guaranteed that the effect of disorder is $O(\delta H^2)$, not only after averaging over all disorder, but for any individual case of disorder.

Let us briefly attempt to motivate that this is not the generic case -- a Hamiltonian without perfect transfer will have terms $O(\delta H)$. To see this, let us write $e^{-iHt_0}\ket{N}=\ket{\psi}$. We continue to assume that the (unperturbed) Hamiltonian is field-free, and that $N$ is odd. Hence $\braket{n}{\psi}$ is real for odd $n$ and imaginary for even $n$ \footnote{This follows from the fact that $DH_1D=-H_1$ where $D=\sum_n(-1)^{n+1}\proj{n}$, which allows us to relate $\braket{n}{\psi}$ and $\braket{n}{\psi}^*$.}. We will give two analytic (but imperfect) cases:
\begin{itemize}
\item If $\braket{2}{\psi}\neq 0$, then select $\delta H=\epsilon H_0$ for some small $\epsilon.$ Since $H_0$ commutes with $e^{-iH_0t}$, we get
$$
F=|\braket{1}{\psi}|^2+2\epsilon\text{Re}\left(\braket{\psi}{1}\int_0^{t_0}dt\bra{1}e^{-iH_0t}H_0e^{iH_0t}\ket{\psi}\right)=|\braket{1}{\psi}|^2+2\epsilon t_0\braket{\psi}{1}\braket{2}{\psi}\bra{1}H_0\ket{2},
$$
which clearly has an $O(\epsilon)$ term. However, this is imperfect because if $t_0$ has been selected to optimise the transfer fidelity,
$$
\frac{d}{dt}\bra{1}e^{-iH_0t}\ket{1}=0\implies \braket{2}{\psi}=0,
$$
contradicting this assumption.
\item Alternatively, let $\delta H=\epsilon H_0^3$. Then, we rely on $\braket{4}{\psi}\neq 0$ which is not constrained in the same way as $\braket{2}{\psi}$. However, this uses disorder that is not a modification of existing coupling terms.
\end{itemize}
These imperfect cases are suggestive but not absolute. Instead, we resort numerics. For example, the uniform chain of length 51 with $t_0$ coinciding with the first transfer peak, with a perturbation $\epsilon(\ket{1}\bra{2}+\ket{2}\bra{1})$ has a first order correction to the fidelity of $\sim 0.017\epsilon$. This is further supported in the left-hand panel of Figure \ref{fig:xy}. Close to the optimal choice of parameter for the Apollaro model, we see ellipses of constant fidelity, indicating a linear regime.

Perfect state transfer chains are local optima in the space of Hamiltonians in respect of resistance to Hamiltonian perturbations. As perfect transfer Hamiltonians are characterised by a discrete spectral property \cite{kay2010a}, there is not opportunity to move smoothly within this space to achieve further optimisation.

\subsection{Apollaro Revisited}

If we wish to work in a regime where time is so short that there aren't perfect transfer solutions, there are a limited number of high transfer fidelity models that we can start from. We consider the Apollaro model, with its parameters $x$ and $y$. Those values of $x$ and $y$ were initially chosen to optimise the transfer fidelity for end-to-end transfer in the absence of disorder. For a chain of length 51, for example, \cite{apollaro2012} gives the optimal values. But now we are optimising according to different criteria, whether that be maximising the transfer fidelity in the absence of disorder but making use of encoding across multiple sites, or additionally introducing disorder (much as in \cite{zwick2015} without encoding). One should expect that the optimal values would be different. We have numerically assessed these in the case of $N=51$ and present the results in Table \ref{tab:xy}. There is substantial variation in optimal choice of $x$ and $y$ depending on the various parameters of disorder strength and number of encoding/decoding sites, indicating the need for a case by case optimisation in order to achieve the peak fidelity. The variation with parameters $x,y$ in the upper-quartile transfer fidelity for a specific case of disorder is shown in the right-hand panel of Fig.\ \ref{fig:xy}. While there is benefit to optimising for the specific situation, Fig.\ \ref{fig:xy} shows that the main peak is broad, meaning that the model is quite permissive.

\begin{table}
\begin{tabularx}{5.5cm}{>{\setlength\hsize{1\hsize}\centering}Xc|cc}
size of\newline encoding region & $\delta$ & $x$ & $y$ \\\hline\hline
1 & 0 &  0.4322 & 0.7338 \\
1 & 0.1 &0.43& 0.74 \\
3 & 0 & 0.48 & 0.8 \\
3 & 0.1 & 0.50 & 0.84 \\
3 & 0.15 & 0.53 & 0.91 \\
5 & 0.1 & 0.38 & 0.76
\end{tabularx}
\caption{Optimal choice of parameters for Apollaro chain in different settings, chain length $N=51$. Disorder comprises absolute errors from a uniform distribution in range $\pm \delta$ on coupling strengths ($J_n$) only.}\label{tab:xy}
\end{table}

\begin{figure}
\includegraphics[width=0.45\textwidth]{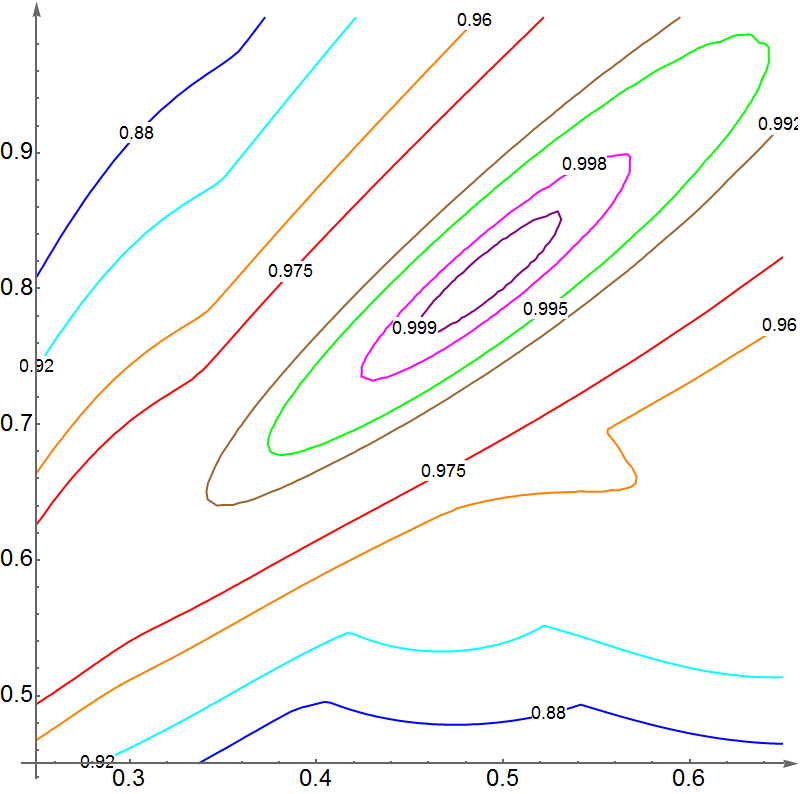}
\includegraphics[width=0.45\textwidth]{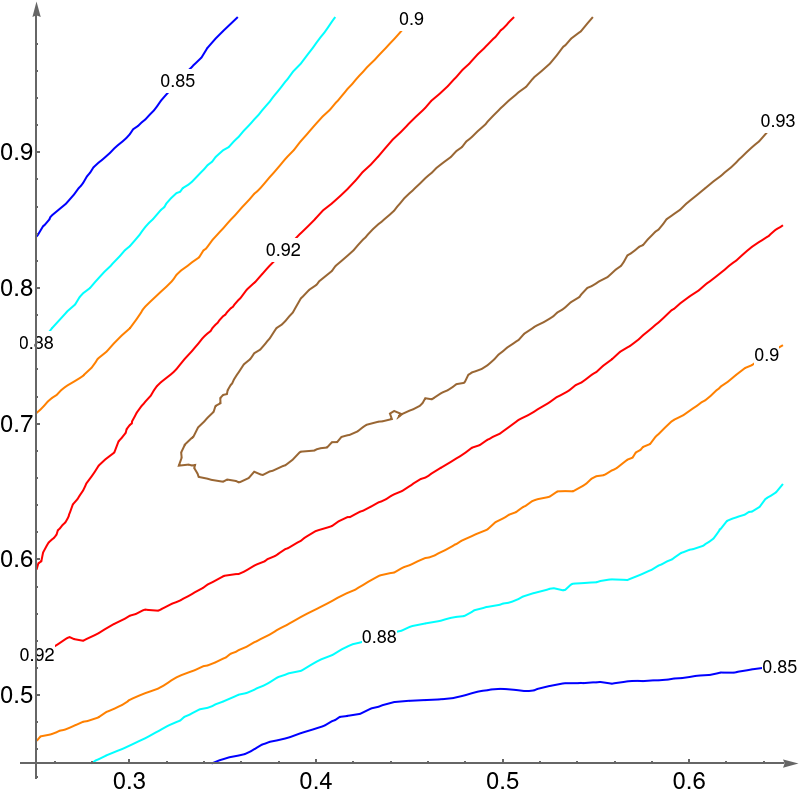}
\caption{Plots of achieved fidelity of $x$ vs.\ $y$ parameters in Apollaro model, encoding/decoding over 3 sites, chain length 51. Left: no disorder. Right: Absolute errors of up to 0.1 on coupling strengths ($J_n$) only (uniform distribution, 0 mean).}\label{fig:xy}
\end{figure}

\section{Entanglement Distribution}\label{sec:entdist}

State transfer chains are also useful for entanglement distribution. Instead of using a single unknown qubit state as input, you supply one half of a Bell pair. This half gets transferred to the opposite end of the chain, and you have an approximate Bell pair shared between two distant parties. Clearly, this protocol can be updated to incorporate encoding and multiple excitations (whether that's the creation of a single Bell pair using a multiple excitation encoding, or multiple Bell pairs). The purpose of this section is to argue that there is essentially no benefit to anything other than using a single-excitation encoding.

To see that it is sufficient to restrict to the distribution of a single Bell pair using the single-excitation subspace, consider the following protocol for transferring an unknown state:
\begin{itemize}
\item Initialise the whole chain in the $\ket{0}^{\otimes N}$ state.
\item Create $n$ Bell pairs on $2n$ qubits (not on the chain).
\item Transfer the $n$ qubit state comprising one half of each Bell pair onto the first $k\geq n$ qubits of the chain. This transfer may involve a transformational unitary $U$ that implements an encoding which we can optimise over.
\item Perform the state transfer protocol.
\item Decode the last $k$ qubits of the chain onto $n$ ancilla qubits.
\item Create, on $2n$ qubits, a single logical Bell pair where each logical qubit is encoded into $n$ physical qubits of the best possible error correcting code (whatever that might be).
\item Teleport one half of that Bell pair through the distributed Bell pairs that we created.
\item Decode the logical qubits at either end (incorporating error correction). This leaves us with a single, high fidelity Bell pair shared between the two parties at opposite ends of the chain.
\item Teleport an unknown quantum state from one party to another. This arrives with fidelity $F_\text{teleport}$.
\end{itemize}
Note that all the teleportation operations (i.e.\ measurements) can be applied by the party at the start of the chain. With respect to the rest of the protocol, it does not matter \emph{when} these measurements are made. Thus, instead, consider that these measurements are made \emph{before} the state transfer stage, at which point it is clear that this is just a normal state transfer protocol using an encoding unitary $U$ over $k$ qubits, and yet the fidelity of transfer is $F_\text{teleport}$, which therefore cannot exceed the optimal state transfer fidelity \footnote{We are talking specifically here about the state transfer fidelity averaged over all possible input states, not just the fidelity of transfer of a single excitation. This is important due to the influence of the corrective unitaries in the teleportation protocol, which are not applied until after the arrival of the state, and have an averaging effect because, for example, the corrective unitaries for a single-qubit teleportation are the Pauli operators, which form a 2-design \cite{dankert2009}.}, which was created by using a single excitation subspace encoding, as demonstrated in Sec.\ \ref{sec:multiex} (assuming the original transfer chain is of sufficiently high quality).

Teleportation, in this instance, provides no enhancement. Its benefits arise from repeated use of the chain to transfer many Bell pairs independently. These can then be distilled into a single high quality Bell pair, allowing chains with particularly weak fidelities (below the $F=2/3$ classical threshold) to still achieve high quality transfer. Each individual usage is still just as sensitive to disorder as any state transfer protocol, and is just as responsive to the techniques described through this paper, particularly encoding, for enhancing its performance. This may mean that we wish to use chains in the region where multiple excitation encoding could be useful. Nevertheless, explicit solution of the equations shows that there is no advantage unless the encoding and decoding regions contain at least 13 qubits. 

\section{Conclusions}

Our ultimate conclusion is much the same as previous work \cite{zwick2015} -- that once disorder is taken into account, there is little to choose between various high-fidelity state transfer models. However, the key difference is that by using even modestly sized encoding and decoding regions, we can massively enhance the transfer fidelities, to the extent that even the uniformly coupled chain becomes competitive. When considering other parameters relative to which we might like to optimise (ease of manufacture, transfer speed to minimise the effects of noise), this is incredibly useful.

We have proven that encoding into the single excitation subspace is optimal for all chains worth considering for state transfer. This makes a crucial difference to the ability to calculate optimal encodings because the computation required just involves the single excitation subspace, and is therefore a computation on an $N\times N$ matrix for a system of $N$ qubits, rather than a prohibitive calculation on the full $2^N$-dimensional space. This result applies specifically to chains of the XX type, for which the Jordan-Wigner transformation is applicable. It would be interesting to understand if it extends beyond that scope to incorporate the XXZ coupling model, including the Heisenberg model as a special case.

Additionally, we showed that the solutions for perfect state transfer are locally optimal against perturbations, i.e.\ small disorder. At shorter state transfer times, models such as that of Apollaro \cite{apollaro2012} can be specifically tuned to give improved robustness.

The relevance of our results to current experimental implementations depends on the platform. Much focus has recently been placed on superconducting systems, such as those from IBM and Google \cite{arute2019}. While these provide a fixed network of qubits, coupled by exactly the type of Hamiltonian discussed here, the publicly available devices are not built with an aim of producing a specific set of coupling strengths, or minimising the error in those parameters. Moreover, the energy scales of the magnetic fields are a different order of magnitude compared to the coupling strengths. This by itself is not a problem; any uniform magnetic field is tolerated. However, it tends to mean that the error scale of the magnetic field is the same strength as the coupling strengths themselves (i.e.\ on the plots, one should be looking in the regime $\sigma_B \gtrsim1$). This remains some distance from utility without encoding regions on the order of the device size, and is potentially susceptible to Anderson Localisation \cite{ronke2016}. Other experimental systems are far more promising. For instance, optical passage \cite{chapman2016,perez-leija2013} has errors that are on a far more reasonable scale. In \cite{perez-leija2013}, the error in coupling strength $J$ is estimated to be a multiplicative error of about 10\%, with no significant field error. This could undoubtedly be improved, but places it firmly in a regime that would demonstrate substantial benefit from encoding and decoding.

\bibliography{../../../References}
\end{document}